\renewcommand\section{\@startsection{section}{1}{\z@}{-3.25ex plus -1ex minus -.2ex}{1.5ex plus .2ex}{\normalsize\bf}}
\renewcommand\subsection{\@startsection{subsection}{2}{\z@}{-3.25ex plus -1ex minus -.2ex}{1.5ex plus .2ex}{\normalsize\bf}}
\renewcommand\subsubsection{\@startsection{subsubsection}{3}{\z@}{-3.25ex plus -1ex minus -.2ex}{1.5ex plus .2ex}{\normalsize\bf}}
\providecommand{\customgenericname}{}
\newcommand{\newcustomtheorem}[2]{%
  \newenvironment{#1}[1]
  {%
   \renewcommand\customgenericname{#2}%
   \renewcommand\theinnercustomgeneric{##1}%
   \innercustomgeneric
  }
  {\endinnercustomgeneric}
}
\newtheorem{fact}{Fact}
\newtheorem{thm}{Theorem}
\newtheorem{prop}[thm]{Proposition}
\begin{document}
\begin{frontmatter}
\title{Are General Relativity and Teleparallel Gravity\\ Theoretically Equivalent?}
\author{James Owen Weatherall}\ead{weatherj@uci.edu}
\address{Department of Logic and Philosophy of Science\\ University of California, Irvine}
\author{Helen Meskhidze}\ead{emeskhidze@g.harvard.edu}
\address{Black Hole Initiative \\ Harvard University}
\begin{abstract}
Teleparallel gravity shares many qualitative features with general relativity, but differs from it in the following way: whereas in general relativity, gravitation is a manifestation of spacetime curvature, in teleparallel gravity, spacetime is (always) flat. Gravitational effects in this theory arise due to spacetime torsion.  It is often claimed that teleparallel gravity is an equivalent reformulation of general relativity.  In this paper we question that view.  We argue that the theories are not equivalent, by the criterion of categorical equivalence and any stronger criterion, and that teleparallel gravity posits strictly more structure than general relativity.
\end{abstract}
\end{frontmatter}
\doublespacing
\section{Introduction}\label{sec:intro}

Teleparallel Gravity (TPG) is a theory of gravitation that is (1) relativistic, i.e., its metrical structure is Lorentzian; (2) geometrized, i.e., the metric is dynamically related to stress-energy; and (3) empirically equivalent to general relativity (GR), at least locally \citep{TPGBook,Maluf}.  But whereas in GR, space-time is curved in the presence of matter, in TPG, space-time is always flat, as in (non-geometrized) Newtonian gravitation.  The characteristic phenomena of GR arise in TPG from space-time torsion, rather than space-time curvature.  Thus TPG puts pressure on the idea that spatiotemporal curvature is necessary to account for relativistic phenomena.  A physicist (or philosopher) motivated to work in flat space-time, whether for pragmatic reasons or metaphysical ones, can apparently adopt TPG over GR without risk of violating empirical constraints.

Another topic of recent interest concerns whether TPG, as a relativistic theory set in flat spacetime, bears the same relationship to GR that Newtonian gravitation does to geometrized Newtonian gravitation (aka Newton-Cartan theory), which is a formulation of Newtonian gravitation in which spacetime is curved and gravitation is a manifestation of that curvature. \citet{Read+TehCQG} have argued that the relationship between GR and TPG is analogous to that between geometrized and standard Newtonian gravitation, and indeed, that one can think of ordinary Newtonian gravitation as the ``teleparallelization'' of geometrized Newtonian gravitation.  \citet{Meskhidze+Weatherall}, meanwhile, have argued that the Newtonian theory most strongly analogous to teleparallel gravity is a strict generalization of ordinary Newtonian gravitation, where the Newtonian gravitational force is replaced by a force field related to the torsion of a flat derivative operator.  Curiously, though, while this torsional Newtonian theory is plausibly the classical analogue to teleparallel gravity, \citet{MeskhidzeDiss} shows that the non-relativistic limit of teleparallel gravity is, in fact, ordinary Newtonian gravitation, without torsion.\footnote{\citet{Read+TehCQG} make a superficially similar claim about the reduction of TPG to Newtonian gravitation, but they do not consider the non-relativistic limit; instead, they consider what is known as ``dimensional reduction'', which concerns the geometry induced on certain four dimensional surfaced embedded in five dimensional models of general relativity. \citet{Schwartz} also considers a non-relativistic limit of TPG.}

It is broadly taken for granted within the physics literature that GR and TPG are not only empirically equivalent, but theoretically equivalent as well.  \citet{TPGBook}, for instance, write “Teleparallel Gravity … is an alternative description that, though equivalent to General Relativity, separates gravitation from inertia… (vii).” In other literature, Teleparallel Gravity is referred to as the “Teleparallel Equivalent to General Relativity (TEGR)” \citep[c.f.][]{Maluf}.  Some authors place significant conceptual weight on this equivalence.  For instance, \citet{TPGBook} suggest a thesis concerning the conventionality of geometry on the basis of the equivalence of these theories, arguing that both curvature and torsion are equally good ways of representing the same gravitational degrees of freedom.  They also suggest that the equivalence of these two theories reveals something deep about the universality of gravitation, which they emphasize as the feature that distinguishes gravity from other fundamental interactions.  They write, ``As the sole universal interaction, it is the only one to allow also a geometrical interpretation—hence the possibility of two descriptions. From this point of view, curvature and torsion are simply alternative ways of representing the same gravitational field, accounting for the same gravitational degrees of freedom" (p. ii).

Philosophers, too, to the extent they have discussed these issues at all, have largely endorsed some version of the equivalence claim.  \citet{Read+TehCQG}, for instance, call Newtonian gravitation the ``teleparallel equivalent'' of Newton-Cartan theory, and use similar language to discuss TPG and GR.  \citet{Knox}, meanwhile, entertains the possibility that in virtue of positing flat spacetime, TPG could be inequivalent to GR, but she concludes that when properly interpreted the theories are not distinct.\footnote{To be sure, she argues the theories are equivalent because, properly understood, they }

What philosophers have not yet done is bring the formal machinery for assessing theoretical equivalence developed in the philosophical literature over the past decade or so, building on earlier work by \citet{GlymourTETR}, \citet{Quine}, and others, to analyze the relationship between TPG and GR.\footnote{See \citet{WeatherallTE1,WeatherallTE2} for a recent review of the theoretical equivalence literature.}  That is the task we take up in the present paper. Doing so will both clarify the relationship between these two theories by drawing out the features that past work has suggested might be salient to assessing their (in)equivalence, and it will allow for clearer comparisons with other, better known, candidates for equivalent theories.  It will also clarify some aspects of the theories themselves.  As we will show, there is a straightforward sense, known but rarely emphasised within the TPG literature, in which TPG and GR fail to be equivalent.  This is because the natural, empirical-content-preserving map associating models of TPG to models of GR is many-to-one, and thus fails to meet the necessary condition for (definitional) equivalence first identified by \citet{GlymourTETR}.  In this sense, then, TPG and GR turn out to be inequivalent in just the sense that Glymour argues ordinary Newtonian gravitation is inequivalent to geometrized Newtonian gravitation.  In fact even more is true: this many-to-one relationship holds even up to model isomorphism, so that the theories are also inequivalent by the weaker criterion of categorial equivalence.\footnote{Note that this observation also implies that the theories fail to be Morita equivalent \citet{Barrett+Halvorson}.}   

One might attempt to recover equivalence by modifying TPG, analogously to how \citet{WeatherallNGE} suggests one can modify Newtonian gravitation as traditionally conceived, by introducing a new notion of model equivalence that renders (distinct, non-isomorphic) models of TPG associated with the same model of GR equivalent.  But this move is unattractive, because it will turn out that the shared structure among such models is precisely the shared spacetime metric, and thus it appears that the resulting theory is \emph{identical} to GR.  Conversely, if one does not introduce these additional model equivalences -- as one must not to maintain that TPG is a theory of \emph{flat} space-time -- TPG turns out to posit surplus structure relative to GR, in the sense of \citet{WeatherallUG}.  There are distinct models, by the light of the theory, that agree on all of their empirical content.  The theory that results from excising that structure is, again, GR.

The remainder of the paper will proceed as follows.  We will begin, in section \ref{sec:equivalence}, by reviewing some relevant background from the theoretical equivalence literature.  Then, in section \ref{sec:prelim}, we will provide some mathematical preliminaries, with an eye towards fixing necessary facts about Lorentzian geometry in the presence of torsion.  In section \ref{sec:TPG+GR}, we will introduce GR and TPG in more detail.  We then turn to the main arguments of the paper.  In section \ref{sec:main}, we will establish three elementary lemmas that underlie our argument that these theories are inequivalent.  

\section{Theoretical Equivalence \& Structure}\label{sec:equivalence}

Over the past five decades, philosophers have discussed a variety of formal criteria of equivalence of physical theories.\footnote{See \citet{WeatherallTE1,WeatherallTE2} for a recent review.  There is also an established tradition of rejecting formal criteria of equivalence: see, for instance, \citet{Sklar}, \citet{Coffey}, \citet{Nguyen}, and \citet{Teitel}.  Others, such as \citet{NorthStructure,NorthBook}, have used related formal methods to assess (in)equivalence, without providing a once-and-for-all criterion.}  In now-classic work, for instance, \citet{GlymourTETR,GlymourTE} proposed definitional equivalence as a plausible criterion for theoretical equivalence of scientific theories.  Roughly speaking, definitional equivalence obtains of two (first-order) theories (with disjoint signatures) if one can extend both theories by adding suitable ``definition'' formulas in such a way that the theories that result are logically equivalent.  In other words, definitionally equivalent theories ``say the same things'' in different terms. Glymour argued that adopting this criterion implied certain necessary conditions for equivalence that could be used to show that certain \emph{empirically} equivalent theories are nonetheless theoretically \emph{in}equivalent.  

In more recent work, \citet{WeatherallNGE}, inspired by arguments due to \citet{Halvorson}, proposed categorical equivalence as an alternative criterion that could make better sense of certain common equivalence judgments in physics.\footnote{More recently still, \citet{Hudetz} has criticized categorical equivalence and has proposed a refinement that strengthens it; and \citet{WeatherallWNCE} has expressed reservations about categorical equivalence, including Hudetz's proposal.  \citet{MarchKPM} has recently proposed a promising way forward.  Even so, for reasons explained below, the issues Hudetz and Weatherall identify will not trouble us here.}  \citet{Barrett+HalvorsonME}, meanwhile, have proposed a different weakening of definitional equivalence that they call ``Morita equivalence''. Morita equivalence is like definitional equivalence in that it involves introducing a translation manual between extended theories; but it is more general, in the sense that it allows one to define new sorts, rather than just function and predicate symbols.  As \citet{Barrett+HalvorsonME} show, any two theories that are definitionally equivalent are Morita equivalent; and any two theories that are Morita equivalent are categorically equivalent.  The converses, however, do not hold.\footnote{The relationship between these notions of equivalence and several other relations between theories is exhaustively explored by \citet{Meadows}.}  

In virtually all work on these criteria of equivalence, two theories are said to be theoretically equivalent (if and) only if they are empirically equivalent and, in addition, satisfy a further formal criterion of equivalence.\footnote{See \citet{WeatherallTE1} for a discussion of the challenges arising from requiring empirical equivalence.  Note, too, that in this sense these criteria are not ``purely'' formal, \emph{pace} some critics.}  Moreover, one requires for equivalence that the sense in which two theories are empirically equivalent and the sense in which they are formally equivalent are compatible. The requirement of (compatible) empirical equivalence, which we explain in more detail below, is imposed to ensure that formally equivalent theories with different subject matters are not judged to be equivalent.  Thus, empirical equivalence is understood to be the weakest criterion under consideration, since it is required, by stipulation, for all of these criteria.  Of the criteria widely considered in the literature, categorical equivalence is the next weakest, followed by Morita equivalence, definitional equivalence, and logical equivalence, which is the strongest.\footnote{Other criteria, such as KPM-categorical equivalence \citep{MarchKPM}, require somewhat more nuance to apply, and so they do not fit clearly within this hierarchy---though they are certainly not \textit{weaker} than categorical equivalence.  To our knowledge, though, there are no plausible criteria in the literature that lie between categorical equivalence and empirical equivalence.}

We will illustrate how the compatibility requirement works in more detail by stating categorical equivalence as a criterion of theoretical equivalence.\footnote{The foregoing will take for granted some basic concepts from category theory.  For background on that theory, see (for instance) \citet{Leinster}.}  (This is the only criterion that will concern us in what follows, since it is the weakest criterion of theoretical equivalence.)  Suppose we have two physical theories, Theory 1 and Theory 2.  We first identify categories that we use to represent these theories for the purpose of assessing their equivalence.  Typically, these will be categories $\mathcal{T}_1$ and $\mathcal{T}_2$ whose objects are models of the respective theories and whose arrows are ``structure preserving maps'' between those models.  In many cases, authors focus on groupoids of models, which are categories whose arrows are (all) isomorphisms, though in other cases, such as the equivalence (actually, duality) result between Einstein Algebras and General Relativity, \citet{Rosenstock+etal} consider a broader class of maps that preserve metrical structure.  Considerable judgement goes into constructing these categories, which introduces a second sense (beyond empirical equivalence) in which categorical equivalence is not ``purely'' formal.  Although sometimes more than one choice can be motivated, those choices generally also track interpretive choices, and in that sense are informative \citep[c.f.][]{WeatherallNGE,BarrettLH2,MarchMG}.

In addition, we fix ``functors'' $E_1:\mathcal{T}_1\rightarrow \mathcal{W}$ and $E_2:\mathcal{T}_2\rightarrow \mathcal{W}$, relating these models to the ``predictions'' of the theories.  Here $\mathcal{W}$ is some characterization, at a level that is neutral between the theories under consideration, of ``observable states of the world'', with some notion of equivalence between those states.  Precisely how to think about and define $\mathcal{W}$ will be highly context specific.  Moreover, in general one should expect theories that admit of such functors (or some shared category $\mathcal{W}$ that can serve as the codomain of those functors) to already be very similar. (These ideas are abstract at present, but will be made more concrete in the example we discuss below.) 

Having specified this data, we then say that Theory 1 and Theory 2 are \emph{categorically equivalent} if there exist functors $F_1:\mathcal{T}_1\rightarrow\mathcal{T}_2$ and $F_2:\mathcal{T}_2\rightarrow\mathcal{T}_1$ such that:
\begin{enumerate}
\item $F_1$ and $F_2$ are essentially inverses, i.e., they realize an equivalence of categories; and
\item $E_2\circ F_1 = E_1$ and $E_1\circ F_2 = E_2$.
\end{enumerate}
 The first condition states that the two categories are equivalent as categories, whereas the second states that they are empirically equivalent in a way that is compatible with that categorical equivalence.  That is, models of theory 1 are mapped to models of theory 2 with the same empirical consequences, and vice versa.

One advantage of categorical equivalence is that it provides tools for interpreting \emph{inequivalence} of sufficiently similar theories.  In particular, suppose that one has two theories with associated categories $\mathcal{T}_1$ and $\mathcal{T}_2$ as above and suppose there exists a functor $F_1:\mathcal{T}1\rightarrow \mathcal{T}_2$ such that $E_2\circ F_1 = E_1$.  But now suppose that $F_1$ is \emph{not} essentially invertible.  Then there is a sense in which theory 2 can be said to reproduce the empirical content of theory 1, but by that standard of comparison, the theories are not equivalent.  In that case, properties of $F_1$ can be helpful guides for thinking about the relationship between the theories.  

Following arguments due to \citet{Baez+etal}, \citet{WeatherallUG}, and \citet{BarrettPSS}, we say that theory 1 posits more \emph{structure} than theory 2 if $F_1$ fails to be full (its induced map on hom sets is not surjective); theory 1 posits more \emph{stuff} than theory 2 if $F_1$ fails to be faithful (its induced map on hom sets is not injective); and theory 1 assumes more \emph{properties} than theory 2 (or theory 1 \emph{generalizes} theory 2) if $F_1$ fails to be essentially surjective (it is not surjective on objects, even up to isomorphism).\footnote{This framework was originally introduced to philosophy of science to address structural comparisons, and so failures of fullness were most salient \citep{WeatherallNGE,WeatherallUG}.  For a discussion of what it means to forget stuff, see \citet{Nguyen+etal} and \citet{Bradley+Weatherall}.}  Note that a functor can simultaneously fail to be full, faithful, and essentially surjective in any combination (and that if $F_1$ is full, faithful, and essentially surjective, then it is essentially invertible).

\section{Geometrical Background}\label{sec:prelim}

We now turn to some mathematical preliminaries needed to describe GR and TPG in a uniform geometric framework.  Since there is some variation in formalism, definitions, and conventions between different parts of the TPG literature and the GR literature, which sometimes confounds straightforward comparison, our goal here is to provide a uniform setting in which the two theories can be presented systematically.  (We assume familiarity with standard GR.\footnote{For more detail on GR using the same conventions that we do, see \citet{Malament}; \citet{Wald} uses similar conventions, though note that Wald and Malament differ by a sign in their metric signature, which leads to several other sign changes in formulas for, e.g., curvature, torsion, and related concepts.})  Note that in what follows, we will assume that manifolds are smooth, paracompact, Hausdorff, without-boundary, and parallelizable. Parallelizability is not typically required for models of GR, but it is required for ``global'' models of TPG.\footnote{A manifold is said to be \textit{parallelizable} if its tangent bundle is trivial, or, equivalently, if it admits a global framefield. Is parallelizability an unreasonable restriction, or one that already shows GR to be more general than TPG?  Perhaps.  But on the other hand, parallelizability is required for a manifold to admit a (global) spinor structure, and so insofar as the world admits matter fields with spinorial degrees of freedom, one might think even GR must be restricted to the parallelizable sector.}  All of the fields and other structures we define on manifolds will be assumed to be smooth.  

We will need somewhat generalized versions of (GR) textbook definitions to accommodate TPG. First, recall that a \emph{covariant derivative operator} $\nabla$ on a manifold $M$ is a linear map from rank $(r,s)$ tensor fields to rank $(r,s+1)$ tensor fields, for any $r,s\geq 0$, satisfying the Leibniz rule and agreeing with the gradient operator on scalar fields: $\nabla_a\alpha = d_a\alpha$ for every scalar field $\alpha$. A covariant derivative operator determines a standard of \emph{parallel transport} for vector fields along curves.  Note that unlike in some definitions, we do not require that for scalar fields $\alpha$, $\nabla_{[a}\nabla_{b]} \alpha = \mathbf{0}$.  Instead, we define the \emph{torsion} of a derivative operator $\nabla$ to be the tensor field $T^a{}_{bc}$ such that for any scalar field $\alpha$,\footnote{\citet{WaldTorsion} extends Wald's treatment of derivative operators systematically to include torsion.  Our treatment follows that one closely, except that we use Malament's sign conventions.}
\[
2\nabla_{[a}\nabla_{b]}\alpha = T^n{}_{ab}\nabla_n\alpha.
\]
(That torsion is a tensor at all is a non-trivial fact, whose proof is similar to proofs that curvature is a tensor.)  Torsion is a measure of how a basis for the tangent space ``twists'' when parallel-transported along a curve, relative to $\nabla$. A derivative operator is \emph{torsion-free} if its torsion vanishes, in which case the antisymmetrized derivative of a scalar field vanishes.  In GR, the derivative operator is always assumed to be torsion-free.
 
In the presence of torsion, the \emph{curvature} of a derivative operator $\nabla$ is defined as the tensor field $R^a{}_{bcd}$ such that for any vector field $\xi^a$,
\[
R^a{}_{bcd}\xi^b = -2\nabla_{[c}\nabla_{d]}\xi^a + T^n{}_{cd}\nabla_n\xi^a
\]
(Again, to show that such a tensor exists is not trivial, but it is a standard result.  Note, too, that it is only the \textit{sum} of the two terms on the right that defines a tensor; the individual summands are not generally tensors in the presence of torsion.)  A derivative operator is \emph{flat} if its curvature vanishes. Curvature is usually understood as a measure of the path-dependence of parallel transport relative to a derivative operator; this interpretation does not change with the introduction of torsion, and in particular, parallel transport relative to $\nabla$ is locally path-independent just in case $\nabla$ is flat.

Now let $g_{ab}$ be a (non-degenerate) metric on a manifold $M$, of any signature.  A derivative operator on $M$ is \emph{$g$-compatible} if $\nabla_ag_{bc}=\mathbf{0}$.  The fundamental theorem of Riemannian geometry, generalized to pseudo-Riemannian metrics, states that there always exists a unique torsion-free $g$-compatible derivative operator $\nabla$ on $M$.  This derivative operator is known as the Levi-Civita derivative operator; in general, the Levi-Civita derivative operator will have non-vanishing curvature. Once we allow torsion, however, there are always many $g-$compatible derivative operators available.  In fact, given any smooth tensor $T^a{}_{bc}$, anti-symmetric in its lower indices, there always exists a (unique) $g-$compatible derivative operator with torsion $T^a{}_{bc}$.  In general, these torsionful derivative operators will have curvature.  However, if one allows arbitrary torsion, one can also always find $g-$compatible \emph{flat} derivative operators.  

Note that there is an asymmetry between curvature and torsion here.  If we stipulate that torsion vanishes, there is always a unique (curved) $g-$compatible derivative operator; if we stipulate that curvature vanishes, there are always \emph{many} torsionful derivative operators available.  How we do know this?  One can construct flat, $g-$compatible derivative operators by considering orthonormal frames.  Any such frame will determine a unique (generally torsionful) derivative operator relative to which that frame is parallel (i.e., constant).\footnote{Note that this frame will in general be \textit{anholonomic}, which means it is not associated with a coordinate chart.  In fact, a flat derivative operator has torsion if and only if its constant frames are anholonomic.}  This derivative operator will be $g-$compatible due to the fact that the frame is orthonormal; and it will be flat because it admits a basis of constant fields (namely, the frame).  Now consider a change of frame that acts as a Lorentz transformation at each point, where that transformation is not constant with respect to the induced derivative operator.  The result will be a new orthonormal frame, which will induce a distinct $g-$compatible, torsionful, flat derivative operator.\footnote{The argument here may seem at odds with claims in the TPG literature that torsionful derivative operators are invariant under ``local Lorentz transformations'' \cite[c.f.][\S 5.2.1]{TPGBook}.  What is true is that any connection, with torsion or without, can be expressed relative to any frame. In other words, given a connection, expressed relative to one frame, one can always express the same connection relative to a different frame, related to the first by a local Lorentz transformation.  But in general, at most one of the frames will be constant relative to the connection; and the two connections determined by the two frames will be distinct (and both $g-$compatible). 
 We are grateful to James Read for noting this apparent conflict.}

Suppose that $\nabla$ is the unique torsion-free derivative operator compatible with a metric $g_{ab}$.  Then given any other derivative operator $\tilde{\nabla}$ compatible with $g_{ab}$, the \emph{contorsion} tensor is the field $K^a{}_{bc}$ such that for any vector field $\xi^a$,
\[
\tilde{\nabla}_a\xi^b = \nabla_a\xi^b - K^b{}_{an}\xi^n
\]
The contorsion tensor can be defined directly in terms of the torsion, as 
\[
K^a{}_{bc}=\frac{1}{2}\left(T^a{}_{bc}-T_{cb}{}^a-T_{bc}{}^a\right)
\]
Note that the contorsion tensor is neither symmetric nor anti-symmetric in its lower indices.  However, $2K^a{}_{[bc]} = T^a{}_{bc}$, where $T^a{}_{bc}$ is the torsion of $\tilde{\nabla}$. 

Finally, we consider a notion of structure-preserving maps for manifolds with (possibly torsionful) derivative operators.  Let $M$ and $M'$ be manifolds of the same dimension, with derivative operators $\nabla$ and $\nabla'$ respectively. An embedding $\varphi:M\rightarrow M'$ \emph{preserves $\nabla$} if for all smooth vector fields $\xi^a$ on $\varphi[M]\subseteq M'$, $\varphi_*(\nabla_a\varphi^*(\xi^b))=\nabla'_a\xi^b$.  Similarly, if $M$ and $M'$ are endowed with metrics $g_{ab}$ and $g'_{ab}$, respectively, we say an embedding is \emph{isometric} if $\varphi^*(g'_{ab})=g_{ab}$.\footnote{Recall that an embedding is always a diffeomorphism onto its image.}  Note that because the Levi-Civita derivative operator is uniquely determined by a metric, any isometric embedding between manifolds with metrics preserves the Levi-Civita derivative operator.  But the analogous result does not hold for $g-$compatible derivative operators with torsion.  That is, if $\nabla$ and $\nabla'$ on $M$ and $M'$ are compatible with $g_{ab}$ and $g'_{ab}$, respectively, and $\varphi$ is an isometric embedding, it does not follow that $\varphi$ preserves $\nabla$.  This is closely related to the non-uniqueness of torsionful $g-$compatible derivatives noted above.  

\section{TPG \& GR}\label{sec:TPG+GR}

We now turn to defining the theories under consideration.  The theories share a great deal of structure, and admit similar interpretations.  In both cases, we define the theories by specifying a class of models, i.e., mathematical structures we use to represent certain classes of physical situations.  The sorts of situations we are interested in representing with these models, in both cases, are possible (regions of) universes.  Both theories begin with a four-dimensional manifold, $M$, assumed, as noted above, to be smooth, Hausdorff, paracompact, and parallelizable.  Points of this manifold represent events, localized in space and time.  In addition, in both theories we assume this manifold is endowed with a Lorentz-signature metric, $g_{ab}$, which determines spatiotemporal relations between events and, in particular, causal structure.  The interpretation of this metric is the same in both theories, in the sense that it determines local lengths of curves, and it distinguishes between spacelike, timelike, and null vectors.  All of this is just as in standard GR, and we will not dwell on it.

The key differences in the theories comes in their affine structure, as determined by a derivative operator.  In GR, we add to this structure the Levi-Civita derivative operator determined by $g_{ab}$.    (In many treatments of GR, this structure is suppressed, since it is uniquely determined by $g_{ab}$, but there is no harm in mentioning it explicitly.) This derivative operator determines a standard of parallel transport, and, in particular, a class of geodesics.  The timelike geodesics represent the ``force-free'' motions of small massive point particles; while the null geodesics represent the possible trajectories of light rays.  Forces act on bodies by accelerating them, relative to this derivative operator.  And finally, associated with any matter in spacetime is a rank-2 tensor field $T^{ab}$, representing the energy and momentum properties of that matter.  The total energy-momentum tensor, that is, the sum of energy-momentum tensors associated with all matter fields present in space and time, is required to be divergence-free with respect to the Levi-Civita derivative operator.\footnote{For details on the status of this final condition, see \citet{WeatherallCC}.}

For present purposes, we take the empirical content of a model of GR to be exhausted by the structure $(M,g_{ab})$.  This is justified because $g_{ab}$ uniquely determines $\nabla$; and $g_{ab}$ is determined, up to a constant multiple, by its timelike and null geodesics.\footnote{This result is original to \citet{Weyl}; see \citet[\S 2.1]{Malament} for a discussion.}  The picture behind this choice is that we probe the structure of spacetime by studying the motions of free massive point particles and light rays.  Thus the physical significance of the metric is given by the possible motions of those bodies.  This is a highly idealized picture.  One could of course enrich this conception of empirical content, to specify other kinds of observable matter fields.  But we suggest that any more complicated picture of this sort will ultimately involve using the dynamical behavior of that matter, in a limit where it only weakly perturbs the background spacetime structure; and that that dynamical behavior will be determined by the metric (and other structures uniquely determined thereby).  So, we would argue, nothing in what follows depends on adopting the idealized choice we do here.
 
We now turn to TPG.  Here, too, we begin with a pair $(M,g_{ab})$ as above, and add to it a $g-$compatible derivative operator $\tilde{\nabla}$.  Now, however, we require $\tilde{\nabla}$ to be flat, but allow it to have torsion.  The interpretation of this derivative operator is essentially the same as above, in the sense that it determines a class of geodesics, and one takes the timelike and null geodesics to be possible trajectories for small, massive bodies and light rays, respectively, in the absence of any gravitational interactions.  Importantly, however, gravitation is now viewed as a force, which induces an acceleration for massive bodies (and light rays).  This force is determined by the contorsion tensor $K^a{}_{bc}$, so that the trajectory of a massive point particle will be represented by a curve whose acceleration is given by
\begin{equation}\label{Force}
\xi^n\tilde{\nabla}_n\xi^a = -K^a{}_{nm}\xi^n\xi^m
\end{equation}
where $\xi^a$ is the 4-velocity of the particle.  Thus, though the derivative operator determines a class of force-free motions, the actual trajectories of bodies will generally differ from those geodesics.  Likewise, the energy-momentum associated with matter will not, in general, be divergence-free, relative to $\tilde{\nabla}$; instead, the divergence of the total matter energy-momentum $T^{ab}$ is given by:
\[
\tilde{\nabla}_a T^{ab} = -K^a{}_{an}T^{nb} - K^b{}_{an}T^{an}.
\]
The interpretation of this equation is that the total energy-momentum of matter changes, locally, due to the effects of gravitation.  

We once again take the empirical content of a model to be exhausted by the structure $(M,g_{ab})$.  This is because, although the distinction between force-free and forced motion in these models depends on the derivative operator $\tilde{\nabla}$, the trajectories that massive bodies and light rays follow is in fact entirely determined by the metric.  In fact, Eq. \eqref{Force} implies that the trajectories of bodies will be the geodesics of the Levi-Civita derivative operator, just as in general relativity.

Note that we have said nothing, here, about gravitational dynamics. Of course, in both theories, one can write down dynamical equations relating properties of spacetime -- in GR, the metric and spacetime curvature (viz., Einstein's equation); in TPG, the metric and spacetime torsion -- to the distribution of energy-momentum in space and time.  But the details of how this works will be irrelevant in what follows, and so we neglect it. Instead, we assume all models, in the sense described above, are dynamically possible for some energy-momentum distribution or other (allowing for physically unrealistic matter, which may not arise from any known matter fields, satisfy energy conditions, etc.) Moreover, we simply stipulate that the theories are dynamically equivalent, in the sense that if we have a dynamically possible model $(M,g_{ab},\nabla)$ of one of the theories, for some energy-momentum tensor $T^{ab}$, then there will be a corresponding model $(M,g_{ab},\tilde{\nabla})$ of the other theory, with the same metric, which is dynamically possible with the same energy-momentum tensor.  Our justification for this is both that it appears to be widely accepted in the literature and, more importantly, if it failed, the theories would be empirically inequivalent, and therefore theoretically inequivalent on all of the criteria considered above.  

\section{Principal Claims, Defended}\label{sec:main}

With this background in place, we now turn to the paper's principal claims: GR and TPG are not equivalent, and GR has \emph{less structure} than TPG.  In fact, this follows directly from observations already made; one only needs to put them together and fit them into the equivalence framework described above.  We rely on the following facts, all established previously:
\begin{fact} For every model of TPG, there exists a unique model of GR with the same metric. 
\end{fact}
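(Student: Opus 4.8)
The plan is to unpack the statement of Fact 1 and verify that the natural assignment from a TPG model to a GR model, namely discarding the flat torsionful derivative operator $\tilde{\nabla}$ and retaining only $(M,g_{ab})$, does in fact land in the class of GR models and is single-valued. First I would observe that a model of TPG is, by the definitions in Section \ref{sec:TPG+GR}, a triple $(M,g_{ab},\tilde{\nabla})$ with $M$ a four-dimensional parallelizable manifold (smooth, Hausdorff, paracompact, without boundary), $g_{ab}$ a Lorentz-signature metric, and $\tilde{\nabla}$ a flat $g$-compatible derivative operator (possibly with torsion). A model of GR is a triple $(M,g_{ab},\nabla)$ with $M$ a four-dimensional manifold satisfying the same topological hypotheses and $\nabla$ the torsion-free derivative operator compatible with $g_{ab}$.

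The existence half is then immediate: given $(M,g_{ab},\tilde{\nabla})$, take the same $M$ and same $g_{ab}$, and let $\nabla$ be the Levi-Civita derivative operator of $g_{ab}$. By the generalized fundamental theorem of (pseudo-)Riemannian geometry recalled in Section \ref{sec:prelim}, such a $\nabla$ exists; it is torsion-free and $g$-compatible by construction, so $(M,g_{ab},\nabla)$ meets the definition of a GR model. One should also note that the topological hypotheses transfer trivially, since $M$ is literally the same manifold, and in particular parallelizability --- which is needed for TPG but not strictly for GR --- is more than enough. The uniqueness half is equally short: the fundamental theorem asserts that the torsion-free $g$-compatible derivative operator is \emph{unique}, so once we have fixed that the GR model must share the metric $g_{ab}$, its derivative operator is forced to be the Levi-Civita operator $\nabla$ of $g_{ab}$, and hence the GR model $(M,g_{ab},\nabla)$ is uniquely determined by the data $(M,g_{ab})$ carried over from the TPG model.

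There is essentially no obstacle here; the content of Fact 1 is entirely bookkeeping on top of the uniqueness clause of the fundamental theorem of Riemannian geometry, which is assumed known. The only point requiring a word of care is the word ``same'': one must be clear that ``same metric'' means the GR model is required to have underlying manifold $M$ and metric $g_{ab}$ literally equal to those of the given TPG model (not merely isometric to them), and with that reading both existence and uniqueness are immediate. If instead one wanted the statement ``up to isomorphism'', the argument would be the same modulo carrying an isometry through, but for the purposes of the later lemmas the literal reading is what is wanted, since it is precisely this strict, on-the-nose correspondence --- existing and unique in the GR direction, but (as the subsequent facts will show) many-to-one in the TPG direction --- that drives the inequivalence argument. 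I would therefore present the proof as: (i) existence via Levi-Civita; (ii) uniqueness via the uniqueness clause of the fundamental theorem; (iii) a remark that the topological conditions are inherited automatically.
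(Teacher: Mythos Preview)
Your proposal is correct and matches the paper's approach: Fact 1 is not given a separate proof in the paper but is simply recorded as an immediate consequence of the fundamental theorem of (pseudo-)Riemannian geometry recalled in Section \ref{sec:prelim}, which guarantees existence and uniqueness of the torsion-free $g$-compatible (Levi-Civita) derivative operator. Your additional bookkeeping remarks about the topological hypotheses and the meaning of ``same'' are apt but more than the paper itself spells out.
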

\begin{fact} For every model of GR, there exist many distinct models of TPG with the same metric. 
\end{fact}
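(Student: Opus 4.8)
The plan is to show, for an arbitrary model $(M,g_{ab},\nabla)$ of GR, that there are many pairwise distinct flat, $g$-compatible derivative operators on $(M,g_{ab})$; each, together with $g_{ab}$, is a model of TPG with the same metric. Nearly everything needed is already in place in Section~\ref{sec:prelim}, so the work is one of assembly and counting rather than fresh construction. I would begin by fixing some flat, $g$-compatible derivative operator $\tilde{\nabla}$ on $(M,g_{ab})$. At least one exists: one may take the operator induced by a global orthonormal frame field, as in Section~\ref{sec:prelim}, or simply invoke the stipulated dynamical equivalence of the two theories, which already guarantees that every model of GR is matched by a model of TPG with the same metric.

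Given $\tilde{\nabla}$, I would manufacture others by the change-of-frame construction from Section~\ref{sec:prelim}, localized so as not to depend on any global frame. On any contractible open $U \subseteq M$, the flat, $g$-compatible operator $\tilde{\nabla}$ admits a parallel orthonormal frame $\{e_\mu{}^a\}$. Let $\Lambda$ be a smooth map from $M$ into the Lorentz group that equals the identity outside a compact set $K \subseteq U$, and set $e'_\mu{}^a := \Lambda^\nu{}_\mu\, e_\nu{}^a$ on $U$. This is again an orthonormal frame on $U$, and it coincides with $\{e_\mu{}^a\}$ on $U \setminus K$, so the derivative operator it induces on $U$ agrees there with $\tilde{\nabla}$ and hence glues with $\tilde{\nabla}|_{M\setminus K}$ to a globally defined operator $\tilde{\nabla}_\Lambda$ on $M$. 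This $\tilde{\nabla}_\Lambda$ is flat (every point has a neighborhood carrying a $\tilde{\nabla}_\Lambda$-parallel frame) and $g$-compatible (it is $\tilde{\nabla}$ off $K$ and frame-induced on $U$), so $(M,g_{ab},\tilde{\nabla}_\Lambda)$ is a model of TPG with metric $g_{ab}$.

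It remains to count. Expressed in the frame $\{e_\mu{}^a\}$, the operator $\tilde{\nabla}_\Lambda$ differs from $\tilde{\nabla}$ by a term built from $d\Lambda$, which is nonzero exactly when $\Lambda$ is non-constant on $U$; more generally $\tilde{\nabla}_\Lambda = \tilde{\nabla}_{\Lambda'}$ precisely when $\Lambda$ and $\Lambda'$ differ by a locally constant Lorentz-valued factor. So I would exhibit a continuum of admissible maps no two of which so differ: fix a non-constant smooth bump $f : M \to \mathbb{R}$ with support in $U$ and a one-parameter subgroup $\gamma$ of the Lorentz group that is injective (a boost, say), and set $\Lambda_c := \gamma \circ (cf)$ for $c \in \mathbb{R}$. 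Then each $\Lambda_c$ is the identity off $\supp{f} \subseteq U$, while $\Lambda_{c'}\Lambda_c^{-1} = \gamma \circ ((c'-c)f)$ is non-constant whenever $c \neq c'$; hence the operators $\tilde{\nabla}_{\Lambda_c}$ are pairwise distinct, giving a continuum of distinct models of TPG all sharing the metric $g_{ab}$.

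The one point that calls for genuine care is the distinctness calculation --- verifying that a pointwise Lorentz transformation of a parallel orthonormal frame really does alter the induced derivative operator unless that transformation is locally constant. But this is precisely the fact already recorded in Section~\ref{sec:prelim}, in the discussion following the construction of flat, $g$-compatible operators (where it was flagged in connection with how claims of ``local Lorentz invariance'' in the TPG literature ought to be read); so no new argument is required, and the remaining verifications --- orthonormality of $\Lambda$-rotated frames, flatness and $g$-compatibility of the glued operator, the gluing itself --- are routine. A frame-free presentation is also possible, treating the $g$-compatible derivative operators as an affine space over a space of tensor fields and asking which perturbations preserve flatness; but it only adds friction, since working with parallel orthonormal frames makes the flatness and compatibility constraints automatic.
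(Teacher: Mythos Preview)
Your proposal is correct and follows essentially the same approach as the paper: both arguments construct distinct flat, $g$-compatible derivative operators by applying a non-constant pointwise Lorentz transformation to a parallel orthonormal frame, exactly the mechanism spelled out in Section~\ref{sec:prelim}. Your localization and gluing via compactly supported $\Lambda$ is unnecessary here since the paper assumes parallelizability (so a global orthonormal frame always exists), and your explicit one-parameter family $\Lambda_c$ sharpens ``many'' to a continuum, but neither addition departs from the paper's method.
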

\begin{fact} In general, an isometric embedding preserves a metric-compatible derivative operator only if it is torsion-free.
\end{fact}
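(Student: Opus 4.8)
The plan is to read the claim as the precise converse of an observation already recorded in Section~\ref{sec:prelim} --- that a \emph{torsion-free} metric-compatible derivative operator is automatically preserved by every isometric embedding --- and to prove it in two parts: a short affirmative part, which I would take over directly from that section, and a counterexample, which carries all the weight. The affirmative direction is that if $\varphi:(M,g_{ab})\to(M',g'_{ab})$ is an isometric embedding and $\nabla$, $\nabla'$ are the (unique) torsion-free operators compatible with $g_{ab}$, $g'_{ab}$, then $\varphi$ preserves $\nabla$: this holds because the Levi-Civita operator is characterized locally and diffeomorphism-equivariantly, so $\varphi^{*}$ of the Levi-Civita operator of $g'_{ab}$ is the Levi-Civita operator of $\varphi^{*}g'_{ab}=g_{ab}$, namely $\nabla$. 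I record this only to isolate the feature of the torsion-free case that is really doing the work: the uniqueness, hence canonicity, of the compatible operator.

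The substance is to show that once that uniqueness fails --- i.e., once torsion is allowed --- an isometric embedding need not preserve the operator, so that preservation ``in general'' picks out the torsion-free one. I would argue in a maximally symmetric setting, reusing the construction from Section~\ref{sec:prelim}: on Minkowski spacetime $(\mathbb{R}^4,\eta_{ab})$, take an anholonomic orthonormal frame and the flat, torsionful, $\eta$-compatible operator $\nabla$ relative to which it is constant, then apply a local Lorentz transformation that is not constant relative to $\nabla$ to obtain a second orthonormal frame, whose induced operator $\nabla'$ is again $\eta$-compatible but distinct from $\nabla$. The identity map is then an isometric embedding $(\mathbb{R}^4,\eta_{ab},\nabla)\to(\mathbb{R}^4,\eta_{ab},\nabla')$, and --- unwinding the definition of ``preserves $\nabla$'' with $\varphi_{*}=\varphi^{*}=\mathrm{id}$ --- it preserves the derivative operator if and only if $\nabla=\nabla'$, which is false by construction. (Equivalently, one can keep a single torsionful $\nabla$ and exhibit a Poincar\'e isometry that fails to commute with it because it moves the torsion tensor; either packaging produces a witness for an arbitrary torsionful compatible operator.)

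The main obstacle is stating the claim without overstating it. The bare implication ``an isometric embedding preserves $\nabla$ $\Rightarrow$ $\nabla$ is torsion-free'' is false for trivial reasons --- the identity map of a fixed $(M,g_{ab},\nabla)$ preserves $\nabla$ whatever its torsion, and sufficiently symmetric torsion can survive sufficiently symmetric isometries. So the honest target is: if $\nabla$ has torsion then \emph{some} isometric embedding fails to preserve it, and indeed the generic one does, since the isometry group acts nontrivially on the affine space of $g_{ab}$-compatible operators while fixing only the Levi-Civita point. I would therefore present the counterexample as producing such an embedding for an arbitrary torsionful $\nabla$, and note that this is exactly the breakdown of the Levi-Civita naturality argument from the affirmative part --- the same non-uniqueness phenomenon behind the two preceding facts, now seen through the lens of structure-preserving maps.
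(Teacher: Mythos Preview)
Your proposal is correct and follows essentially the same approach as the paper: the paper establishes Fact~3 in Section~\ref{sec:prelim} by noting that the Levi-Civita operator is preserved by isometric embeddings in virtue of its uniqueness, and that torsionful $g$-compatible operators fail this precisely because of the non-uniqueness exhibited via the local-Lorentz-transformation-of-frames construction --- which is exactly the counterexample you deploy, with the identity map on $(M,g_{ab})$ witnessing the failure. Your explicit attention to the logical form of the claim --- that the bare conditional is trivially false and that the honest content is the existence, for any torsionful compatible operator, of an isometric embedding that fails to preserve it --- is more careful than the paper itself, which leaves the phrase ``in general'' informal.
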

 
Now we assemble the pieces.  First, we define categories associated with each of our theories.  Let $\mathcal{GR}$ be the category whose objects are models $(M,g_{ab},\nabla)$ of GR and whose arrows are isometric embeddings. And let $\mathcal{TPG}$ be the category whose objects are models $(M,g_{ab},\tilde{\nabla})$ of TPG and whose arrows are isometric embeddings that preserve the derivative operator.  Finally, for both theories, define the ``world'' (of empirical consequences) as the category whose objects are pairs $(M,g_{ab})$ and whose arrows are isometric embeddings.  Empirical consequence functors $E_{GR}$ and $E_{TPG}$ are defined in the obvious way, as taking models to their reducts $(M,g_{ab})$.
 
Now observe that there is a functor $F:\mathcal{TPG}\rightarrow\mathcal{GR}$ that commutes with $E_{GR}$ and $E_{TPG}$.  $F$ acts on objects $(M,g_{ab},\tilde{\nabla})$ by taking them to $(M,g_{ab},\nabla)$, where $\nabla$ is the Levi-Civita derivative operator.  It takes arrows to themselves. We then have the following result:
\begin{prop} $F$ is not an equivalence of categories.  It is faithful and essentially surjective, but it is not full.
\end{prop}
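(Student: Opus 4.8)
The plan is to check the three asserted properties of $F$ separately: faithfulness and essential surjectivity are nearly immediate from the definition of $F$ together with Facts 1 and 2, and the only property with genuine content is the failure of fullness, which follows from Fact 2 (the existence of \emph{many} flat $g$-compatible derivative operators) together with the rigidity built into the arrows of $\mathcal{TPG}$.

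For faithfulness: by construction $F$ sends each arrow to ``itself'', i.e.\ it leaves the underlying embedding $\varphi:M\to M'$ untouched and merely reinterprets it as an arrow between the Levi-Civita reducts. Hence for any objects $X,Y$ of $\mathcal{TPG}$ the induced map $\mathrm{Hom}_{\mathcal{TPG}}(X,Y)\to\mathrm{Hom}_{\mathcal{GR}}(FX,FY)$ is injective, since distinct arrows are distinct maps to begin with. (That $F$ is well defined on objects is exactly Fact 1: the GR model sharing a given metric is unique, namely the Levi-Civita one; and commutation with $E_{GR}$ and $E_{TPG}$ is immediate, since all three functors act as the identity on the reduct $(M,g_{ab})$.)

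For essential surjectivity: let $(M,g_{ab},\nabla)$ be any object of $\mathcal{GR}$, so that $\nabla$ is the Levi-Civita derivative operator of $g_{ab}$ and $M$ is parallelizable. By Fact 2 there is at least one flat $g$-compatible derivative operator $\tilde\nabla$ on $M$, hence an object $(M,g_{ab},\tilde\nabla)$ of $\mathcal{TPG}$, and $F(M,g_{ab},\tilde\nabla)=(M,g_{ab},\nabla)$ on the nose. So $F$ is in fact surjective on objects, a fortiori essentially surjective.

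For the failure of fullness --- the crux --- fix any $(M,g_{ab})$ that admits two \emph{distinct} flat $g$-compatible derivative operators $\tilde\nabla\neq\tilde\nabla'$; by Fact 2 (equivalently, by the orthonormal-frame construction of section \ref{sec:prelim}, changing frame by a position-dependent Lorentz transformation) such an $(M,g_{ab})$ exists. Put $X=(M,g_{ab},\tilde\nabla)$ and $Y=(M,g_{ab},\tilde\nabla')$, so $FX=FY=(M,g_{ab},\nabla)$ with $\nabla$ the shared Levi-Civita operator. The identity $\mathrm{id}_M$ is an isometric embedding, hence an arrow $FX\to FY$ in $\mathcal{GR}$. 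Were $F$ full, there would be an arrow $\psi:X\to Y$ in $\mathcal{TPG}$ with $F\psi=\mathrm{id}_M$; but $F$ leaves underlying maps alone, so $\psi$ would have to \emph{be} $\mathrm{id}_M$, and the requirement that an arrow of $\mathcal{TPG}$ preserve the derivative operator would then give $\tilde\nabla_a\xi^b=\tilde\nabla'_a\xi^b$ for every vector field $\xi^a$, i.e.\ $\tilde\nabla=\tilde\nabla'$, a contradiction. Hence $\mathrm{Hom}_{\mathcal{GR}}(FX,FY)$ is not exhausted by the image of $F$, and $F$ is not full. Fact 3 is the structural reason this happens: passing to a torsionful compatible derivative operator makes the arrows of $\mathcal{TPG}$ strictly more demanding than those of $\mathcal{GR}$, so maps available downstairs need not lift. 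I do not expect a genuine obstacle in the argument; the main thing requiring care is conceptual bookkeeping rather than computation --- namely, confirming via section \ref{sec:prelim} that genuinely distinct flat $g$-compatible derivative operators exist, and keeping precise track of exactly what an arrow is in each of the two categories, so that the nonexistence of a lift of $\mathrm{id}_M$ is established correctly.
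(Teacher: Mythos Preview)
Your proof is correct and, for non-fullness, follows essentially the paper's second argument: exhibit two distinct flat $g$-compatible operators on the same $(M,g_{ab})$ and observe that $\mathrm{id}_M$ is an arrow in $\mathcal{GR}$ between $FX$ and $FY$ that cannot lift. You are in fact more thorough than the paper, which asserts but does not verify faithfulness or essential surjectivity; your non-fullness argument is also slightly cleaner, since you only need that $\mathrm{id}_M\notin\mathrm{Hom}_{\mathcal{TPG}}(X,Y)$ rather than the paper's stronger (and not independently argued) claim that this hom-set is empty.
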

\begin{proof} To show $F$ is not an equivalence, it is sufficient to show that it is not full.  But this follows from Fact 3 above, because every isometric embedding preserves the Levi-Civita derivative, but there are isometric embeddings that do not preserve torsionful $g-$compatible metrics.  These will be arrows in $\mathcal{GR}$ that do not lie in the image of $F$.  Alternatively, we can see this result following from Facts 1 and 2 in conjunction, because those facts establish that $F$ maps distinct, non-isomorphic objects $(M,g_{ab},\tilde{\nabla})$ and $(M,g_{ab},\tilde{\nabla}')$ to a unique object $(M, g_{ab},\nabla)$.  The hom-set between these models is empty, and so no arrow is mapped to the identity on $(M, g_{ab},\nabla)$ under this action. \end{proof}

This result shows that $F$ does not realize an equivalence.  But perhaps $F$ is not the right functor to consider.  One might wonder, for instance, if there is another functor that does realize an equivalence.  But recall that for categorical equivalence of physical theories, we demand that the functor must commute with the ``empirical'' functors $E_{GR}$ and $E_{TPG}$.  We claim that $F$ is the unique functor from $\mathcal{TPG}$ to $\mathcal{GR}$ that does this, and so it is the unique functor that could realize an equivalence (if the theories were equivalent at all).  This follows from Fact 1, since $F$ must preserve the metric in order to commute.

\section{Conclusion}

The foregoing establishes that TPG is not categorically (or Morita, or definitionally, etc.) equivalent to GR, because TPG posits more structure than GR.  Indeed, since the theories are empirically equivalent, one might conclude, following the arguments of \citet{WeatherallUG}, that TPG posits \textit{excess structure}, because both theories have the capacity to represent all of the same situations, but TPG does so in a redundant way.  The basic reason can be seen in two ways.  One way to see it is to note that the structure of models of TPG is preserved by fewer maps than the corresponding models of GR.  The other way to see it is to note that the natural, empirical-consequence-preserving mapping between models of TPG and models of GR is many-to-one, even up to isomorphism.
 
Now that everything has been laid out, one might think this result is surprising.  After all the models look like they have the same number and type of elements: $(M,g_{ab},\nabla)$ vs. $(M,g_{ab},\tilde{\nabla})$. Each of them attributes metrical structure and (compatible) affine structure to the world; they differ only in whether they require torsion or curvature to vanish.  But this comparison is misleading, and it shows why simple examination of the tuples we use to represent our models is not an adequate way to make structural comparisons of theories.  The difference is related to the ``implicit definability conception (IDC)'' of structure \citep{Barrett+etal}.  The crucial fact in the present case is that a metric implicitly defines its torsion-free derivative operator, in the sense that every smooth map that preserves the metric also preserves the Levi-Civita derivative operator.  But the metric does not implicitly define other $g-$compatible derivative operators. These other derivative operators require, or determine, further structure, namely, a class of preferred (anholonomic) frames.  

The case discussed here is analogous to other cases of inequivalence in the literature, such as the relationship between Newtonian gravitation and Newton-Cartan theory \citep{GlymourTE,WeatherallNGE} or the relationship between vector potential and electromagnetic field formulations of electromagnetism \citep{WeatherallNGE,WeatherallUG}.  In those cases, there is a sense in which the theories are inequivalent; but also a well-studied sense in which equivalence can be restored by introducing additional maps -- roughly speaking, \textit{gauge transformations} -- to equivocate between models of the theory with excess structure.   This strategy, called \textit{sophistication} by \citet{DewarSoph}, can be seen as implementing a reinterpretation of the models of the theory with more structure, where one stipulates that the structure that distinguishes the theories is to be neglected for the purposes of interpreting the physical significance of the models of the theories.

Could we pursue that strategy there?   Yes.  But the strategy trivializes the problem.  The question often arises, when one applies sophistication, whether there is an intrinsic formulation of the sophisticated theory available, in which the ``extra'' arrows are naturally seen as isomorphisms of that structure \citep{DewarSoph,Jacobs}.   In other cases, some work is needed to find such a formulation, and by doing that work, one often learns something about the theories in question.  For instance, in the Newtonian case, one finds that the additional arrows one adds turn out to be \emph{Maxwell transformations}, so that one can see the background spacetime structure of the ``sophisticated'' theory as Maxwell spacetime, rather than Galilean spacetime. This observation leads to the very interesting question of whether a satisfactory version of Newtonian gravitation can be formulated on Maxwellian spacetime \citep{Saunders,KnoxEP,WeatherallMaxwell,WeatherallRot,Dewar,Chen,MarchMG}.  But in the present case, something else happens.  If we equivocate between models associated with different metric-compatible derivative operators in TPG, we are left with just the structure of the manifold and metric, $(M,g_{ab})$.  This structure implicitly defines a torsion-free metric-compatible derivative operator---or in other words, it is just a model of GR.  Thus, there is a precise sense in which TPG, with its excess structure removed, is simply GR.

Where does this leave us?  We think that the arguments given here provide general reasons to think that GR provides a more accurate, or perspicuous, representation of reality than TPG.  To see the point, it is perhaps helpful to contrast our view with that of \citet{Knox}.  Knox argues that TPG, properly understood, describes the curved spacetime of general relativity in an imperspicuous way.  In more detail, she argues that, since small bodies in TPG follow the geodesics of the Levi-Civita derivative operator, the inertial structure of TPG is the same as the inertial structure of GR.  She then invokes to the strong equivalence principle to argue that a spacetime functionalist should attribute to TPG the same curved spacetime structure as in GR. We agree with this point---though we also note that the argument is only persuasive to those who accept the probative role that she grants the strong equivalence principle. 
We end up in a similar place -- that TPG, properly construed, is an imperspicuous characterization of GR -- but for different reasons.  On our accounting, it is that TPG has surplus structure that, when eliminated, reproduces GR that leads us to conclude that TPG is a redundant description of what is more perspicuously described by GR. 
 
\section*{Acknowledgments}
We are grateful to Eleanor March, Ruward Mulder, and James Read for discussion of this material, and to a helpful audience at the 2023 Foundations of Physics meeting in Bristol, UK for questions and comments.

\singlespacing

\bibliography{teleparallel}
\bibliographystyle{elsarticle-harv}

\end{document}